	\newtheorem{proposition}{Proposition}
	\definecolor{ForestGreen}{RGB}{34,139,34}
	\newtheorem{theorem}{Theorem}
	\DeclareMathOperator*{\argmax}{arg\,max}
	\def\stretchint#1{\vcenter{\hbox{\stretchto[440]{\displaystyle\int}{#1}}}}
\begin{document}	
	\title{	\vspace{-0.4cm}\huge{Distributed IRSs Always Benefit Every Mobile Operator}}
	\author{L. Yashvanth,~\IEEEmembership{Student Member,~IEEE} and Chandra R. Murthy,~\IEEEmembership{Fellow,~IEEE}
	\thanks{This work was financially supported by the Qualcomm UR 6G India Grant, a research grant from MeitY, Govt. of India, and the Prime Minister's Research Fellowship, Govt. of India.}
			\thanks{The authors are with the Dept. of ECE, Indian Institute of Science, Bengaluru, India 560 012. (e-mails: \{yashvanthl, cmurthy\}@iisc.ac.in).}
		\vspace{-0.3cm}
		}
		\maketitle
		\begin{abstract}
We investigate the impact of multiple distributed intelligent reflecting surfaces (IRSs), which are deployed and optimized by a mobile operator (MO), on the performance of user equipments (UEs) served by other co-existing out-of-band (OOB) MOs that do not control the IRSs. We show that, under round-robin scheduling, in mmWave frequencies, the ergodic sum spectral efficiency (SE) of an OOB MO increases logarithmically in the total number of IRS elements with a pre-log factor that increases with the ratio of the number of OOB paths through the IRS to the number of elements at an IRS. We further show that the maximum achievable SE of the OOB MO scales log-linearly with the total IRS elements, with a pre-log factor of $1$. Then, we specify the minimum number of IRSs as a function of the channel parameters and design a distributed IRS system in which an OOB MO almost surely obtains the maximum SE. Finally, we prove that the outage probability at an OOB UE decreases exponentially as the number of IRSs increases, even though they are randomly configured from the OOB UE's viewpoint. We numerically verify our theory and conclude that distributed IRSs always help every MO, but the MO controlling the IRSs benefits the most.
\end{abstract}
		\begin{IEEEkeywords}
			Intelligent reflecting surfaces, out-of-band performance, mmWave frequency bands, distributed systems.
		\end{IEEEkeywords}
		\section{Introduction}
		\IEEEPARstart{I}{\lowercase{ntelligent}} reflecting surfaces (IRSs) have been envisioned to meet the high data requirements for future wireless systems~\cite{RuiZhang_IRSSurvey_TCOM_2021,Choi_WCL_2021,Chen_TWC_2023}. They are made up of several passive elements that can be independently tuned to reflect signals in desired directions. Further, in a distributed version of it, multiple IRSs are uniformly spread out in the environment, which provides additional diversity benefits to the system. Also, in practice, multiple mobile operators (MOs) coexist in a geographical area and independently serve multiple user equipments (UEs) over non-overlapping frequency bands. Then, it is important to understand how a distributed IRS  deployed and optimized by only one of the MOs affects the performance of other out-of-band (OOB) MOs. This aspect is pertinent because an IRS does not have a band-pass filter and reflects every signal that impinges it over a wide bandwidth. This paper analyzes the impact of distributed IRSs on other MOs in the mmWave band.

		 A few works in the literature have shown that distributed IRSs improve multiplexing gains in multiple antenna  systems~\cite{Choi_WCL_2021,Chen_TWC_2023}. Similarly, they have been used to reduce link blockages in ultra-reliable and low latency communication (URLLC) applications~\cite{Zhang_ArXiv_2023_TWC_EA}. Further, multiple IRSs can be used to improve coverage in cellular systems~\cite{Shi_TCOM_2023}. Also, distributed IRSs have been leveraged to efficiently mitigate multi-user interference in multi-user scenarios~\cite{Peng_TGCN_2022}.  Finally, these merits can be obtained with minimal channel estimation overheads~\cite{Yashvanth_Globecom_2023}.  
		 
		 The above-existing works implicitly assume that a single MO controls the IRSs. However, if multiple MOs co-exist and only one of them deploys multiple IRSs to serve its UEs optimally, then the impact of the IRSs on the performances of the OOB MOs (which do not control the IRSs) is unexplored but is an important aspect to be understood in real-world deployment scenarios. Further, BSs of different MOs typically do not cooperate with each other. Hence, existing approaches that jointly optimize the IRS phases by cooperation across BSs cannot be used to solve this problem. Although we studied this aspect in a single IRS case \cite{Yashvanth_TCOM_2023}, the channel properties in distributed IRSs are different due to multiple independent links offered by the IRSs. This leads to new results \& insights, so it merits an independent study for the multiple IRS scenario.
		 
		We consider two MOs, X and Y, operating on non-overlapping mmWave bands. The (in-band) MO X deploys and optimizes multiple distributed IRSs to serve a UE in every time slot.  The (OOB) MO Y does not deploy any IRS and is oblivious to MO X's IRSs.\footnote{The broad conclusions of this paper can be shown to hold for any number of OOB MOs and also when every MO has its own set of IRSs.}  
		Then, our key contributions are:		 
		\begin{enumerate}[leftmargin=*]
		 \item Under round-robin (RR) scheduling, we derive the ergodic sum spectral efficiencies (SE) of the MOs. If $N$ is the total number of IRS elements, we show that the SE of MO X grows as $\!\mathcal{O}(2\log_2(N))$, and the SE of MO Y scales as $\mathcal{O}(\tau\log_2(N))$, where the pre-log factor $\tau\!\in\! [0,1]$  increases with the ratio of the number of OOB paths through the IRS to the number of elements at an IRS. (See Theorem~\ref{thm:rate_characterization_mmwave_single_path_IB_multiple_IRS}.)
		 		 \item We design a distributed IRS system and specify the minimum number of IRSs for MO Y to \emph{almost surely} achieve the maximum SE (i.e., for $\tau=1$.) (See Proposition~\ref{prop_OOB_SNR_linear_N}.)
		 \item Finally, we show that the outage probability at an arbitrary OOB UE decreases exponentially as the number of IRSs deployed by MO X increases. (See Theorem~\ref{thm_outage_prob_OOB_UE}.)
		 \end{enumerate}
		 Through numerical simulations, we affirm that distributed IRSs enhance the performance of not only the UEs for which it is optimized but also of other OOB UEs at no additional signal processing costs, both instantaneously and on average.\\
		\indent  \emph{Notation:} $\mathcal{CN}(0,\sigma^2), \text{Ber}(p), \text{Bin}(n,p)$ are circular symmetric Gaussian, Bernoulli, and Binomial distributions with parameters $\sigma^2$, $p$, and $n,p$, respectively. $\odot$ is the Hadamard product. $\mathsf{Pr}(\cdot)$ and $\mathbb{E}[\cdot]$ stand for probability and expectation, $\mathcal{O}$ is the Landau's Big-O notation, $x!$,  $\lceil \cdot \rceil$, $\mathbbm{1}_{\{\cdot\}}$, and $\cup$ refer to the factorial of $x$, ceil \& indicator functions, and set unions.
		\section{System Model and Problem Statement}
		We consider a wireless system where two MOs, X and Y, provide services to $K$ and $Q$ UEs on non-overlapping mmWave frequency bands using their base stations: BS-X and BS-Y, as shown in Fig.~\ref{fig:Network_scenario_single_IRS}. For simplicity and to isolate the \emph{impact of IRSs} on the system, we consider a single antenna BSs~\cite{Yaqiong_JSTSP_2022}. However, all our results directly extend to multiple antenna BSs. MO X deploys and optimizes $S>1$ distributed IRSs with $M$ elements each (with a total of $N=SM$ IRS elements) to optimally serve its (in-band) UEs in every time slot. On the other hand, MO Y does not have an IRS and is oblivious to the presence of MO X's IRSs. 
			\begin{figure}[t!]
				\vspace{-0.3cm}
			\centering
			\includegraphics[width=0.78\linewidth]{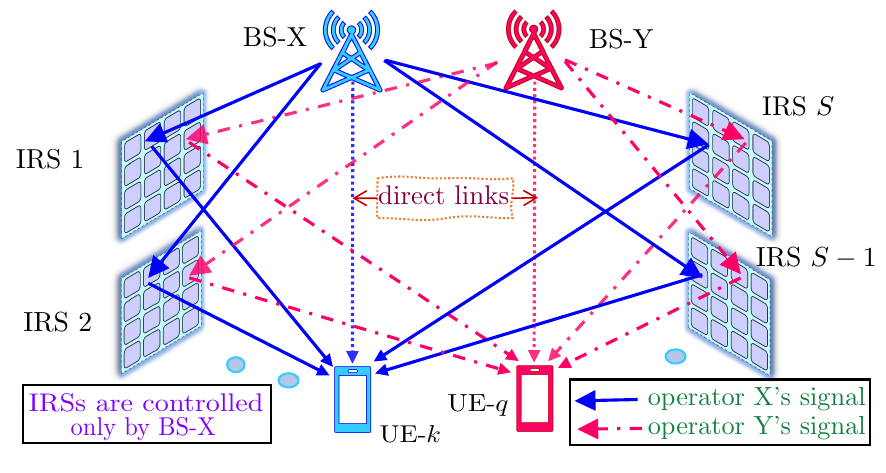}
			\caption{Network scenario of a distributed IRS-aided two-operator system.}
			\label{fig:Network_scenario_single_IRS}
			\vspace{-0.15cm}
			\vspace{-1\baselineskip}
			\end{figure}
		The downlink signal received at the $k$th (in-band) UE served by BS-X is\footnote{We neglect inter-IRS reflections; they experience higher path loss and contribute to negligible received energy compared to single IRS reflections.}\!
			\vspace{-0.1cm}
		\begin{equation}\label{eq:airtel_downlink}
				\vspace{-0.1cm}
			\!\!y_{k} = \left(h_{d,k}+\sum\nolimits_{s=1}^S \mathbf{g}_{s,k}^T\boldsymbol{\Theta}_s\mathbf{f}_s^X\right)x_k + n_k ,
		\end{equation} where $\{\mathbf{g}_{s,k},\mathbf{f}_s^X\} \in \mathbb{C}^{M \times 1}$ are the channels from $s$th IRS to $k$th UE and BS-X to $s$th IRS, respectively; the diagonal matrix $\boldsymbol{\Theta}_s$ contains the phase shifts of the elements at $s$th IRS, i.e., 
					\vspace{-0.05cm}
		\begin{equation}
					\vspace{-0.05cm}
		\boldsymbol{\Theta}_s \triangleq \text{diag}\left(\left[e^{j\zeta_{1,s}},e^{j\zeta_{2,s}},\ldots,e^{j\zeta_{M,s}}\right]\right) \in \mathbb{C}^{M \times M},
		\end{equation}
		 where $\zeta_{m,s}$ is the phase shift at $m$th element of $s$th IRS, $h_{d,k}$ is the direct channel from BS to UE-$k$. Also, $x_k$ is UE-$k$'s data symbol with $\mathbb{E}[|x_k|^2] = P$, and $n_k$ is the additive noise with $n_k \sim \mathcal{CN}(0,\sigma^2)$. Similarly, at OOB UE-$q$ served by BS-Y,
		\vspace{-0.1cm}
		\begin{equation}\label{eq:jio_downlink}
				\vspace{-0.1cm}
			y_{q} = \left(h_{d,q}+\sum\nolimits_{s=1}^{S} \mathbf{g}_{s,q}^T\boldsymbol{\Theta}_s\mathbf{f}_s^Y\right)x_q	+ n_q.
		\end{equation} 
	We consider the Saleh-Valenzuela channel model~\cite{Wang_TWC_2022_IRS_BA}:
			\vspace{-0.15cm}
		\begin{equation}\label{eq:ch_model_mmwave_single_IRS}
				\vspace{-0.1cm}
			\mathbf{f}_s^{p} \!=\!\! \sqrt{\!\tfrac{M}{L^{(1)}_{s,p}}}\sum_{i=1}^{L^{(1)}_{s,p}}\!\!\gamma^{(1)}_{i,s,p} \mathbf{a}^*_M(\phi^{i}_{s,p}); \mathbf{g}_{s,t} \!=\!\! \sqrt{\!\tfrac{M}{L^{(2)}_{s,t}}}\sum_{j=1}^{L^{(2)}_{s,t}}\!\!\gamma^{(2)}_{j,s,t}\mathbf{a}^*_M(\psi^{j}_{s,t}),\!\!\!
		\end{equation} where $p \in \{X,Y\}$, $L^{(1)}_{s,p}$ and $L^{(2)}_{s,t}$ are the number of resolvable spatial paths in the BS-$p$ to $s$th IRS, and $s$th IRS to UE-$t$ links, respectively. 
		For notational simplicity, we let $L^{(1)}_{s, X}=L^{(1)}_{s, Y}\triangleq L_1$, and $L^{(2)}_{s,t}= L_2 $ $\forall t, s$. Further, $\phi^{i}_{s,p}$ and $\psi^{j}_{s,t}$ denote the sines of the angle of arrival of the signal from BS-$p$ at $s$th IRS in the $i$th path, and the departure from $s$th IRS to the $t$th UE in the $j$th path, respectively, where sine of an angle ($\phi_\chi$) is related to the physical angle ($\chi$) by\footnote{In the sequel, the term ``angle" will denote the sine of a physical angle.}
	$\phi_\chi \!=\!  (2d/\lambda)\sin(\chi)$. Here, $d$ and $\lambda$ represent the IRS inter-element distance and the signal wavelength, respectively. The fading coefficients, $h_{d,t},\gamma^{(1)}_{i,s,p}$ and $\gamma^{(2)}_{j,s,t}$, are independently sampled from $\mathcal{CN}(0,\beta_{d,t}), \mathcal{CN}(0,\beta_{\mathbf{f}^p})$, and $\mathcal{CN}(0,\beta_{\mathbf{g},t})$, respectively, where $\beta$'s are the link path losses.\footnote{We consider the path losses $\beta_{\mathbf{f}^p}$ and $\beta_{\mathbf{g},t}$ to be independent of the IRS index~\cite{Lin_Bai_Access_2017}. In practice, the path losses of these links depend on the exact IRS locations. Optimizing the IRS locations is beyond the scope of this paper.} The array response vector of a uniform linear array (ULA) based IRS is denoted by $\mathbf{a}_M(\phi)$, with angle $\phi$ (and $d\!=\!\lambda/2$)\footnote{The results in the paper can be extended to other types of IRS geometries. We consider the ULA geometry for simplicity~\cite{FeiZhao_WCL_2023}.}:
		\vspace{-0.2cm}
		\begin{equation}\label{eq:array_vector_template}
				\vspace{-0.2cm}
			\mathbf{a}_M(\phi) =1/{\sqrt{M}} \left[1, e^{-j\pi\phi},\ldots,e^{-j(M-1)\pi\phi}\right]^T.
		\end{equation}
Recall that an $M$-element IRS forms at most $M$ resolvable beams~\cite{Wang_TWC_2022_IRS_BA,Rui_Zhang_WCL_2020}. So, the path angles in~\eqref{eq:ch_model_mmwave_single_IRS} are drawn independently and uniformly at random from the angle-book, $\mathbf{\Phi} \triangleq\! \left\{\! \left(\!-1 \!+ \!\frac{2i}{M}\right)\!\big\rvert i =  0,\ldots,M-1\!\right\}$.  Further, since only one of the $L \triangleq L_1L_2$ cascaded paths contains most of the energy, aligning the IRS along this path procures near-optimal benefits at the in-band UEs. So, the overall channels to the in-band UEs can be simplified by their dominant paths~\cite{Jinghe_TCOM_2023}. Using~\eqref{eq:ch_model_mmwave_single_IRS} in~\eqref{eq:airtel_downlink} with $L_1=L_2=1$, the effective channel for UE-$k$ is, $h_k$ 
 \begin{align}
 \!\!&\!\!= h_{d,k}\!+\!M\sum\nolimits_{s=1}^{S} \gamma^{(1)}_{1,s,X}\gamma^{(2)}_{1,s,k}\mathbf{a}_M^H(\psi^{1}_{s,k})\boldsymbol{\Theta}_s\mathbf{a}_M^*(\phi^{1}_{s,X}) \!\!\!\\
&\stackrel{(a)}{=} h_{d,k}  \!+ \!M\sum_{s=1}^{S}  \left(\! \gamma^{(1)}_{1,s,X}\gamma^{(2)}_{1,s,k}\left(\mathbf{a}_M^H(\phi^{1}_{s,X})\!\odot\mathbf{a}_M^H(\psi^{1}_{s,k})\right)\right)\!\boldsymbol{\theta}_s  \nonumber\\
&\stackrel{(b)}{=} h_{d,k} + M \sum\nolimits_{s=1}^{S} \gamma_{X,s,k}\mathbf{\dot{a}}_M^H(\omega^1_{X,s,k})\boldsymbol{\theta}_s, \label{eq:ch_model_IB_UE}
\end{align}
where $\boldsymbol{\theta}_s \!\triangleq\! \text{diag}(\boldsymbol{\Theta}_s)$, $(a)$ is due to  the properties of Hadamard products. In $(b)$, \textcolor{black}{$\omega^1_{X,s,k} \triangleq \sin^{-1}\left(\sin(\phi^{1}_{s,X}) + \sin(\psi^{1}_{s,k})\right)$}, $\gamma_{X,s,k} \triangleq \gamma^{(1)}_{1,s,X}\gamma^{(2)}_{1,s,k}$, and $\mathbf{\dot{a}}_M(\cdot)$ is the array vector normalized by $M$ (see~\eqref{eq:array_vector_template}), i.e., $\mathbf{\dot{a}}_M(\cdot) = \frac{1}{\sqrt{M}}\mathbf{a}_M(\cdot)$. Also, since the IRS is not optimized to the OOB UEs, we retain the  channel at an OOB UE-$q$ with all the paths and simplify it as
\vspace{-0.05cm}
\begin{equation}\label{eq_OOB_channel}
h_q = h_{d,q} +  \frac{M}{\sqrt{L}}\sum\nolimits_{s=1}^{S} \sum\nolimits_{\ell=1}^{L} \gamma^{\ell}_{Y,s,q}\mathbf{\dot{a}}_M^H(\omega^{\ell}_{Y,s,q})\boldsymbol{\theta}_s,
\vspace{-0.1cm}
\end{equation} 
where $L\!\triangleq\! L_{s,Y}^{(1)}L_{s,q}^{(2)}$, $\{\gamma^{\ell}_{Y,s,q}\}_{\ell=1}^{L} \!\triangleq\! \{\gamma^{(1)}_{i,s,Y}\gamma^{(2)}_{j,s,q}\}_{i=1,j=1}^{L^{(1)}_{s,Y},L^{(2)}_{s,q}} \ \forall s$.\\
\vspace{-0.2cm}

We can now mathematically state our problem. Suppose the BS-X configures all the $S$ IRSs to maximize the SE at UE-$k$ by solving the joint optimization problem (across the IRSs):
\vspace{-0.2cm}
	\begin{equation}\label{eq_opt_prob}
\!\!\!\!\!\left\{\boldsymbol{\Theta}_s^{\mathrm{opt}}\right\}_{s=1}^{S} \!\!=\!\argmax_{\left\{\boldsymbol{\Theta}_s \right\}_{s=1}^S} \log_2\!\!\left(\!\!1\!\!+\!\left|h_{d,k}\!+\!\!\sum\limits_{s=1}^S \mathbf{g}_{s,k}^T\boldsymbol{\Theta}_s\mathbf{f}_s^{X} \right|^2\!\!\!\!\frac{P}{\sigma^2}\!\!\right)\!,\!\!\!	\tag{P1}
\end{equation} 
subject to $\boldsymbol{\Theta}_s \!\in \!\mathbb{C}^{M \times M}$ being diagonal matrices with unit modulus entries. Then, these phases are randomly tuned from an OOB UE's viewpoint. This is because different MOs do not coordinate with each other, and so it is not feasible to configure the IRSs that are jointly optimal to UEs served by both BS-X and BS-Y. In this context, we address the following:
\begin{enumerate}[leftmargin=*]
\item What is the effect of the randomly configured IRSs on the ergodic SE of the UE served by the OOB MO Y? 
\item What is the best ergodic sum-SE that the OOB MO Y can obtain, and when is it achievable?  
\item How does the outage probability of (OOB) UE-$q$ scale with $S$, $M$, and the channel parameters? 
\end{enumerate}
We answer the above questions in the following sections.
		\section{Ergodic Sum Spectral Efficiency Analysis}\label{sec_single_IRS}
	We begin by noting that the optimal IRS configuration at the $s$th IRS, i.e., the solution to~\eqref{eq_opt_prob}, can be obtained as
	\vspace{-0.05cm}
	\begin{equation}\label{eq:opt_IB_mmwave_airtel_multiple_IRS}
	\boldsymbol{\theta}_s^{\mathrm{opt}} = \dfrac{h_{d,k}\gamma^*_{X,s,k}}{\left|h_{d,k}\gamma_{X,s,k} \right|} \times M \mathbf{\dot{a}}_M(\omega^1_{X,s,k}).
	\vspace{-0.05cm}
	\end{equation}
Clearly, $\boldsymbol{\theta}_s^{\mathrm{opt}}$ has nonzero response in the direction of the in-band UE-$k$'s channel via the $s$th IRS~\cite[Fig.~$3$]{Yashvanth_TCOM_2023}. However, since the OOB channels through the IRSs are also directional, with nonzero probability, one or more of the IRSs also align to an OOB UE's channel. Specifically, for single IRS with a flat-top beamforming pattern~\cite{Madhow_TACM_2011}, with probability $\bar{L}/M$ ($\bar{L}\triangleq\min{\{L, M\}}$), the IRS aligns with the OOB UE, and with probability $\!1-\bar{L}/M\!$, it does not align to the OOB UE, following a Bernoulli distribution~\cite[Proof of Theorem~$3$]{Yashvanth_TCOM_2023}. Now, with $S$ distributed IRSs, since the beam patterns are independent across IRSs and the channels at the OOB UE via each IRS are independent, the overall beamforming pattern at the OOB UE follows the distribution of the sum of $S$ independent and identically distributed Bernoulli random variables, i.e., the Binomial distribution.
 In this view, we next characterize the sum-SE of MOs under round-robin (RR) scheduling of UEs.\!\!\!
 \vspace{-0.1cm}
 		\begin{theorem} \label{thm:rate_characterization_mmwave_single_path_IB_multiple_IRS}
	Consider a distributed IRS-aided mmWave system consisting of $S$ non-colocated IRSs, each with $M$ elements. Then, if the IRSs are optimized (as per~\eqref{eq:opt_IB_mmwave_airtel_multiple_IRS}) to serve the UEs of MO X in every time slot, the ergodic sum-SEs of MOs X and Y, $\bar{S}_M^{(X)} $ and $\bar{S}_M^{(Y)} $ under RR scheduling, scale~as~\eqref{eq:mmwave_rate_airtel_rr_multiple_path} and~\eqref{eq:mmwave_rate_jio_rr_multiple_path} at the top of this page, respectively, where $N$ is the total number of IRS elements ($N=SM$), and $\eta \triangleq M/N$.\!\! 
		 \vspace{-0.2cm}
	\end{theorem}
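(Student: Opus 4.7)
The plan is to analyze the two MOs separately, leaning on the explicit optimal IRS phases~\eqref{eq:opt_IB_mmwave_airtel_multiple_IRS} and the angle-book orthogonality identity $M\mathbf{\dot{a}}_M^H(\phi_1)\mathbf{\dot{a}}_M(\phi_2) = \mathbbm{1}_{\{\phi_1 = \phi_2\}}$ for $\phi_1,\phi_2 \in \mathbf{\Phi}$, which follows from a roots-of-unity computation on the geometric sum $\sum_{m=0}^{M-1} e^{j2\pi m(i_2-i_1)/M}$.

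For MO X, I would substitute~\eqref{eq:opt_IB_mmwave_airtel_multiple_IRS} into~\eqref{eq:ch_model_IB_UE}, which collapses the $s$th IRS contribution to $M|\gamma_{X,s,k}|\cdot e^{j\arg h_{d,k}}$. All $S$ cascaded paths then align in phase with the direct path, giving $|h_k| = |h_{d,k}| + M\sum_{s=1}^{S} |\gamma_{X,s,k}|$. A weak law of large numbers applied to the i.i.d.\ non-negative $|\gamma_{X,s,k}|$ yields $|h_k|^2 = \Theta(M^2 S^2) = \Theta(N^2)$, and a moment computation on the Rayleigh-product distribution gives $\mathbb{E}[\log_2(1+|h_k|^2 P/\sigma^2)] = 2\log_2 N + O(1)$. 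Since MO X's UEs are statistically identical and get equal time shares under RR, $\bar{S}_M^{(X)}$ inherits this $2\log_2 N$ scaling.

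For MO Y, applying the orthogonality identity to $\boldsymbol{\theta}_s^{\mathrm{opt}} \propto M \mathbf{\dot{a}}_M(\omega^1_{X,s,k})$ shows that $\mathbf{\dot{a}}_M^H(\omega^\ell_{Y,s,q})\boldsymbol{\theta}_s^{\mathrm{opt}}$ equals the unit-modulus phasor $c_s \triangleq h_{d,k}\gamma^*_{X,s,k}/|h_{d,k}\gamma_{X,s,k}|$ whenever the OOB angle coincides with the in-band angle, and zero otherwise. Since the $L$ OOB angles and the in-band angle at each IRS are drawn i.i.d.\ uniformly from $\mathbf{\Phi}$, the event that IRS $s$ aligns with at least one OOB path has probability $\bar{L}/M$ with $\bar{L}=\min(L,M)$, giving a Bernoulli indicator $B_s\sim\mathrm{Ber}(\bar{L}/M)$. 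Independence of channels across IRSs then yields $T \triangleq \sum_{s=1}^{S} B_s \sim \mathrm{Bin}(S,\bar{L}/M)$. Conditioned on $T=t\geq 1$, the aligned contributions are $(M/\sqrt{L})\sum_{j=1}^{t} \gamma_j c_j$, where each $\gamma_j c_j$ has uniformly distributed phase and variance $\beta_{\mathbf{f}^Y}\beta_{\mathbf{g},q}$, so the $t$ terms combine \emph{incoherently} with $\mathbb{E}[|h_q|^2\mid T=t] = \beta_{d,q} + tM^2\beta_{\mathbf{f}^Y}\beta_{\mathbf{g},q}/L$. A high-SNR expansion of $\mathbb{E}[\log_2(1+|h_q|^2P/\sigma^2)\mid T=t]$ (using the exponential-integral form of $\mathbb{E}[\log(1+\mu X)]$ for $X\sim\mathrm{Exp}(1)$), followed by outer averaging over the binomial distribution of $T$, produces $\bar{S}_M^{(Y)} = \tau\log_2 N + O(1)$ with $\tau\in[0,1]$ that grows with $\bar{L}/M$ and $S$ through the alignment probability $\mathsf{Pr}(T\geq 1) = 1-(1-\bar{L}/M)^S$.

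The main obstacle is the phase-randomization argument underpinning Step~3: I must justify that, conditioned on alignment, the phasors $\{c_s\}_{s=1}^{S}$ remain uniform on the unit circle and mutually independent, so that the $t$ aligned IRS signals combine incoherently rather than coherently. This is precisely what separates MO Y's pre-log factor (bounded by $1$) from MO X's pre-log of $2$, and what forces $\tau$ to depend on the alignment probability $\bar{L}/M$ instead of on $S$ alone. A secondary technicality is the case $L>M$, where several OOB paths may simultaneously land on the same in-band beam; this is what motivates the truncation $\bar{L} = \min(L,M)$ and the need to verify that double-alignments do not change the $\Theta(tM^2/L)$ conditional-variance scaling.
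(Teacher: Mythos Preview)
Your structural decomposition matches the paper exactly: for MO~X you substitute~\eqref{eq:opt_IB_mmwave_airtel_multiple_IRS} into~\eqref{eq:ch_model_IB_UE} to get $|h_k|=|h_{d,k}|+M\sum_s|\gamma_{X,s,k}|$, and for MO~Y you use the angle-book orthogonality to produce the same Bernoulli/Binomial alignment model $B\sim\mathrm{Bin}(S,\bar{L}/M)$ with the same conditional second moment $\mathbb{E}[|h_q|^2\mid B=\mathfrak{s}]=\beta_{d,q}+\mathfrak{s}M^2\beta_{r,q}/L$. The paper also handles the phase-randomization point you flag only implicitly, so your caution there is appropriate but not something the paper resolves more carefully than you do.

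The one substantive divergence is the final approximation step. The paper does \emph{not} use a WLLN or a high-SNR/exponential-integral expansion; it applies Jensen's approximation $\mathbb{E}[\log_2(1+X)]\approx\log_2(1+\mathbb{E}[X])$ (for MO~X directly, for MO~Y conditionally on $B=\mathfrak{s}$) and then carries out an explicit moment computation. For MO~X this means expanding $|h_k|^2$ into three terms and evaluating $\mathbb{E}[|h_{d,k}|^2]=\beta_{d,k}$, $\mathbb{E}[(\sum_s|\gamma_{X,s,k}|)^2]=S(S-1)(\tfrac{\pi}{4})^2\beta_{r,k}+S\beta_{r,k}$, and the cross term, which is what produces the $\pi^2/16$ and $\pi^{3/2}/4$ constants in~\eqref{eq:mmwave_rate_airtel_rr_multiple_path}. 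For MO~Y, Jensen inside the conditioning plus the Binomial weights gives the explicit sum in~\eqref{eq:mmwave_rate_jio_rr_multiple_path} directly. Your asymptotic route recovers the pre-log factors $2$ and $\tau$ but not the closed forms~\eqref{eq:mmwave_rate_airtel_rr_multiple_path}--\eqref{eq:mmwave_rate_jio_rr_multiple_path} that the theorem actually asserts; to match the statement, replace the WLLN/high-SNR step with Jensen followed by the Rayleigh-product moment calculation.
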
	
			\begin{figure*}[t]
			\vspace{-0.5cm}
	\begin{align}\label{eq:mmwave_rate_airtel_rr_multiple_path}
	\bar{S}_M^{(X)}\! \approx \!\frac{1}{K}\sum_{k=1}^K \log_{2}\left(1 + \left[N^2\left(\dfrac{\pi^2}{16} + \eta \left(1-\dfrac{\pi^2}{16}\right)\right)\beta_{r,k} + N \frac{\pi^{3/2}}{4}\sqrt{\beta_{d,k}\beta_{r,k}}+ \beta_{d,k}\right]\frac{P}{\sigma^2} \right), 	\end{align}
	\vspace{-0.3cm}
		\rule{\textwidth}{0.3mm}
	\end{figure*}
	
	\begin{figure*}[t]
	\vspace{-0.3cm}
	\begin{align}\label{eq:mmwave_rate_jio_rr_multiple_path}
		\!\!\!\!\bar{S}_M^{(Y)} \approx \begin{cases}
			\dfrac{1}{Q}\sum\limits_{q=1}^{Q}  \sum\limits_{\mathfrak{s}=0}^{S} \dfrac{S!}{\left(S-\mathfrak{s}\right)!\mathfrak{s}!}
			{\left(\dfrac{L}{M}\right)}^{\mathfrak{s}}{\left(1-\dfrac{L}{M}\right)}^{S-\mathfrak{s}}\log_2\left(1 + \left[\dfrac{\mathfrak{s}{M}^2}{L}\beta_{r,q}+\beta_{d,q}\right]\dfrac{P}{\sigma^2}\right), & \mathrm{ if } \ \  L < M, \\
			\dfrac{1}{Q}\sum\limits_{q=1}^{Q}\log_2\left(1+\left(\beta_{d,q} + N\beta_{r,q}\right)\dfrac{P}{\sigma^2}\right),& \mathrm{ if } \ \  L \geq M.
		\vspace{-0.2cm}
		\end{cases}
	\end{align}
	\vspace{-0.3cm}
	\rule{\textwidth}{0.3mm}
	\vspace{-0.7cm}
\end{figure*}
\vspace{-0.1cm}
\begin{proof}
We prove the theorem for MOs X and Y separately.
\subsubsection{Ergodic sum-SE of MO-X}

Let the ergodic SE at an arbitrary UE served by MO X, say $k$, be $\langle S_M^{k, X} \rangle$. Then, the ergodic sum-SE of MO X under RR scheduling is $\bar{S}_M^{(X)} = \frac{1}{K} \sum_{k=1}^{K} \langle S_M^{k, X} \rangle$. Computing the exact expression for $\langle S_M^{k, X} \rangle $ leads to intractable forms and is not insightful. So, we apply the Jensen's approximation to $\langle S_M^{k, X} \rangle$ and obtain  
			\begin{equation}
			\!\!\langle S_M^{k, X} \rangle\! = \!\mathbb{E}\!\left[\log_2\! \!\left(\!1\!+\!|h_{k}|^2 \!\frac{P}{\sigma^2}\!\right)\!\right]\! \approx\! \log_2\!\!\left(\!1\!+\!\mathbb{E}\!\left[|h_{k}|^2\right] \! \frac{P}{\sigma^2}\!\right),\!\!
			\end{equation}
			where the expectations are taken with respect to the channels of UE-$k$. To evaluate $|h_k|^2$, we use~\eqref{eq:opt_IB_mmwave_airtel_multiple_IRS}  in~\eqref{eq:ch_model_IB_UE}, and obtain the overall channel gain at UE-$k$ in~\eqref{eq:optimal_ch_UE_k}. 
			\begin{figure*}
			\begin{align}\label{eq:optimal_ch_UE_k}
			|h_k|^2 =\left| |h_{d,k}| + M \sum\nolimits_{s=1}^{S} |\gamma_{X,s,k}|\right|^2  = |h_{d,k}|^2 + M^2 {\left(\sum\nolimits_{s=1}^S |\gamma_{X,s,k}| \right)}^{2}+ 2M|h_{d,k}|\sum\nolimits_{s=1}^S |\gamma_{X,s,k}|.
			\end{align} 
			\vspace{-0.3cm}
	\rule{\textwidth}{0.3mm}
	\vspace{-0.6cm}
			\end{figure*}
			Next, we evaluate the expected values of the three terms in~\eqref{eq:optimal_ch_UE_k}, below:\\
			\underline{\emph{Term I:}} It is clear that $\mathbb{E}[|h_{d,k}|^2] = \beta_{d,k}$. Let $\beta_{r,k} \triangleq \beta_{\mathbf{f}^X}\beta_{\mathbf{g},k}$.\\
			\underline{\emph{Term II:}} $\mathbb{E}\left[\left(\sum\limits_{s=1}^S |\gamma_{X,s,k}| \right)^2\right] = \mathbb{E}\left[\mathop{\sum\limits_{s,p=1}^{S}}_{s \neq p} |\gamma_{X,s,k}| |\gamma_{X,p,k}| \right. $ $\left. + \sum\nolimits_{s=1}^{S}|\gamma_{X,s,k}|^2\right]  =  S(S-1)\left(\frac{\pi}{4}\sqrt{\beta_{r,k}}\right)^2 + S\beta_{r,k}$. Collecting and re-arranging the factors, the expected value of $M^2\! {\left(\sum\nolimits_{s=1}^S |\gamma_{X,s,k}| \!\right)}^{\!2}\!$ is $M^2\beta_{r,k}\left[S^2 \frac{\pi^2}{16} + S\left(1-\frac{\pi^2}{16}\right)\right].$\!\\
			\underline{\emph{Term III:}} Finally, we can show $2M\mathbb{E}\left[|h_{d,k}|\sum_{s=1}^S |\gamma_{X,s,k}|\right]=MS\frac{\pi^{3/2}}{4}\sqrt{\beta_{r,k}\beta_{d,k}}$, due to   independence of $h_{d,k}$ and $\gamma_{X,s,k}$. Summing the above terms and using it in $\bar{S}_M^{(X)}$ yields~\eqref{eq:mmwave_rate_airtel_rr_multiple_path}. 
 
\subsubsection{Ergodic sum-SE of MO-Y} Let the ergodic SE at an arbitrary OOB UE, say $q$, be $\langle S_{M}^{q, Y} \rangle$. Then, under RR scheduling, the ergodic sum-SE is $\bar{S}_M^{(Y)} = \frac{1}{Q}\sum_{q=1}^{Q} \langle S_{M}^{q,Y}  \rangle$. 
First, we prove~\eqref{eq:mmwave_rate_jio_rr_multiple_path} for $L < M$. Define the random variable $A_{\mathrm{s}}$ to denote whether the $\mathrm{s}$th IRS aligns with the channel to OOB UE-$q$ or not; then $A_{\mathrm{s}}\! \sim\! \text{Ber}(L/M)$ for all $\mathrm{s}$. Further, let $B$ count the number of IRSs aligning to the channel of UE-$q$; then, $B = \sum_{\mathrm{s}=1}^{S}A_{\mathrm{s}}$. So, $B \!\sim\! \text{Bin}(S,L/M)$. Now, $\langle S_{M}^{q,Y} \rangle =$\!\!
\begin{align}
&= \mathbb{E}_{|h_q|^2,B}\!\left[\log_2\left(1+\frac{|h_q|^2 P}{\sigma^2}\right)\mathbbm{1}_{\left\{B \in \{0,1,2,\ldots,S \} \!\right\}}\right]\!\\
&\stackrel{(a)}{=}\sum_{\mathfrak{s}=0}^{S} \mathbb{E}
_{|h_q|^2}\!\!\left[\log_2\!\!\left(1+\frac{|h_q|^2 P}{\sigma^2}\right)\!\!\Bigg\vert B=\mathfrak{s}\right] \!{\sf {Pr}}(B=\mathfrak{s})\\
& \stackrel{(b)}{\approx} \sum_{\mathfrak{s}=0}^{S} \log_2\left(1+\mathbb{E}
\left[|h_q|^2\Big\vert B=\mathfrak{s}\right] \frac{P}{\sigma^2}\right){\sf {Pr}}(B=\mathfrak{s}) \label{eq_ergodic_SE_formula_UE_q},
\end{align} 
where $(a)$ is due to the law of iterated expectations, and (b) is due to Jensen's approximation, which is known to be numerically tight. Note that ${\sf {Pr}}(B=\mathfrak{s}) = \frac{S!}{(S-\mathfrak{s})!\mathfrak{s}!}{\left(\frac{L}{M}\right)}^{\mathfrak{s}}{\left(1\!-\!\frac{L}{M}\right)}^{S-\mathfrak{s}}$. Further, $\mathbb{E}
\left[|h_q|^2\Big\vert B=\mathfrak{s}\right] $ is the average channel gain at UE-$q$ when the beams from exactly $\mathfrak{s}$ of the IRSs align with one of the $L$ paths to UE-$q$ through these IRSs. Then, the channel gain due to these \emph{contributing} IRSs is computed as (see~\eqref{eq_OOB_channel})
\vspace{-0.1cm}
\begin{equation}
|h_q|^2 \Big\vert \left\{B=\mathfrak{s}\right\}= \left|h_{d,q} +  \frac{M}{\sqrt{L}}\sum\nolimits_{s=1}^{\mathfrak{s}} \gamma^{\ell^*}_{Y,s,q}\right|^2,
\vspace{-0.1cm}
\end{equation}
 where $\gamma^{\ell^*}_{Y,s,q}$ is the channel gain of the cascaded path that aligns with the $s$th IRS. We can now show that $\mathbb{E}
\left[|h_q|^2\Big\vert B=\mathfrak{s}\right] = (\mathfrak{s}{M}^2/L)\beta_{r,q}+\beta_{d,q}$, where $\beta_{r,q} \triangleq  \beta_{\mathbf{f}^Y}\beta_{\mathbf{g},q}$. Collecting all the terms for~\eqref{eq_ergodic_SE_formula_UE_q} and plugging into $\bar{S}_M^{(Y)}$ yields~\eqref{eq:mmwave_rate_jio_rr_multiple_path} for $L < M$. Now, for $L \geq M$, the probability term $\bar{L}/M=1$. So, every IRS almost surely aligns with the OOB UE. Thus, the overall channel gain and its mean are
\vspace{-0.1cm}
\begin{equation}\label{eqn_OOB_ch_gain_large_L}
\vspace{-0.1cm}
|h_q|^2 = \left|h_{d,q} +  \sqrt{M}\sum\nolimits_{s=1}^{S} \gamma^{\ell^*}_{Y,s,q}\right|^2, \text{ and }
\end{equation} $N\beta_{r,q}\!+\!\beta_{d,q}$, respectively. The proof can now be finished.
\end{proof}
\vspace{-0.3cm}
\noindent We observe from Theorem~\ref{thm:rate_characterization_mmwave_single_path_IB_multiple_IRS} that the ergodic sum-SE of even an OOB MO monotonically grows in $N$, with the peak scaling of $\mathcal{O}(\log_2(N))$ when $L \geq M$. For $L\!<\!M$, as we show in Sec.~\ref{sec_numerical_results}, the OOB SE scales as $\mathcal{O}(\tau\log_2(N))$ for $\tau \in [0,1)$. Here, the exact value of $\tau$ depends on how $L$ compares with $M$. The primary reason for this improvement in the OOB SE is that the IRSs create more paths that enrich the channel at the OOB UEs. We note that the above results generalize the work of~\cite{Yashvanth_TCOM_2023}, which considers $S=1$.
Further, the SE at the (in-band) MO X still scales as $\mathcal{O}(2\log_2(N))$, similar to single IRS setups. Thus, deploying multiple IRSs in a distributed manner retains optimal benefits at the in-band MO and helps other MOs simultaneously at no significant overheads.\!\! 
\vspace{-0.2cm}
\subsection{Design for achieving $\mathcal{O}(\log_2(N))$ growth in the OOB SE}
		From~\eqref{eq:mmwave_rate_jio_rr_multiple_path}, it is clear that the OOB SE maximally scales log-linearly in the number of IRS elements and is achieved when the number of paths $L$ is sufficiently large. Specifically, if $L\geq M$, every IRS contributes to the signal at an OOB UE. We leverage this fact to design new specifications for distributed IRS systems in the following result to almost surely procure $\mathcal{O}(\log_2(N))$ growth in the OOB SE for any $L, N$.
		\begin{proposition}\label{prop_OOB_SNR_linear_N}
		Consider a system where an MO deploys and controls multiple IRSs in a distributed fashion, using a total of $N$ IRS elements. If the number of IRSs $S$ each with $M$ elements providing $L$ cascaded paths at an OOB UE, satisfy 				\begin{equation}\label{eq_linear_gain_system}
		M \leq M^* \triangleq N^{\delta^*}, \text{ and, } S \geq S^* \triangleq \left \lceil N^{1-\delta^*} \right \rceil, 
		\end{equation}
		where $\delta^* = \min\{1,\log_N L\}$, then almost surely, the OOB SE attains the maximum scaling of $\mathcal{O}(\log_2(N))$ for any $N, L$. 
		\end{proposition}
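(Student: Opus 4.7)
The plan is to exploit the dichotomy in the OOB sum-SE expression of Theorem~\ref{thm:rate_characterization_mmwave_single_path_IB_multiple_IRS}. In the $L \geq M$ branch of~\eqref{eq:mmwave_rate_jio_rr_multiple_path}, every IRS almost surely aligns with the OOB UE's channel (since $\bar{L}/M = 1$), and the OOB SE collapses to $\log_2(1 + (\beta_{d,q} + N\beta_{r,q})P/\sigma^2)$, which already scales as $\mathcal{O}(\log_2(N))$. Hence it suffices to choose $(M,S)$ such that the inequality $M \leq L$ is enforced while the total-element budget $SM \geq N$ is respected; this is exactly what the design in~\eqref{eq_linear_gain_system} arranges.

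To make this formal I would split on the value of $\delta^{*} = \min\{1,\log_N L\}$. In the regime $L \geq N$, we have $\delta^{*} = 1$, so $M^{*} = N$ and $S^{*} = 1$: a single (effective) IRS with $N$ elements trivially satisfies $M \leq N \leq L$. In the regime $L < N$, we have $\delta^{*} = \log_N L < 1$, so $M^{*} = N^{\log_N L} = L$, and the constraint $M \leq M^{*}$ is precisely the required inequality $L \geq M$. A quick exponent check then shows that in either case the total number of IRS elements satisfies $S^{*} M^{*} = \lceil N^{1-\delta^{*}}\rceil N^{\delta^{*}} \geq N^{1-\delta^{*}} \cdot N^{\delta^{*}} = N$, so the design respects the $N$-element budget.

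With $L \geq M$ enforced, the second branch of~\eqref{eq:mmwave_rate_jio_rr_multiple_path} applies almost surely, and substituting $N = SM$ into $\log_2(1 + (\beta_{d,q}+N\beta_{r,q})P/\sigma^2)$ directly yields the claimed $\mathcal{O}(\log_2(N))$ scaling of $\bar{S}_M^{(Y)}$. The main obstacle is really only bookkeeping: handling the ceiling in $S^{*}$ without spoiling the exponent arithmetic, and double-checking the edge cases $L = 1$ (where $\delta^{*} = 0$, so $M^{*} = 1$ and $S^{*} = N$, corresponding to $N$ single-element IRSs that align trivially) and $L \geq N$ (where the design degenerates to a single large IRS). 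Beyond these checks, the proposition is essentially a direct corollary of the $L \geq M$ case of Theorem~\ref{thm:rate_characterization_mmwave_single_path_IB_multiple_IRS}.
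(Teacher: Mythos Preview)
Your proposal is correct and follows essentially the same approach as the paper: the paper's proof simply observes that the design condition~\eqref{eq_linear_gain_system} forces $L/M \geq 1$ in all cases, whence every IRS almost surely aligns with the OOB UE and the channel reduces to~\eqref{eqn_OOB_ch_gain_large_L}, yielding the $\mathcal{O}(\log_2(N))$ scaling. Your case split on $\delta^{*}$ and the explicit budget check $S^{*}M^{*} \geq N$ are more detailed than the paper's two-line argument, but the underlying idea is identical.
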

				\vspace{-0.1cm}
		\begin{proof}
		If $S$ and $M$ satisfy~\eqref{eq_linear_gain_system}, we have $L/M \geq 1$ under any scenario. Thus, almost surely, every IRS contributes to the signal at an OOB UE. So, the overall channel at the OOB UE is~\eqref{eqn_OOB_ch_gain_large_L}, for which the ergodic SE scales as $\mathcal{O}(\log_2(N))$.
		\end{proof}
		\vspace{-0.1cm}
		Proposition~\ref{prop_OOB_SNR_linear_N} says that, if the number of paths is $L=N^{\delta}$ for $ 0\!\leq\!\delta\!\leq\!1$, then a distributed system designed as per~\eqref{eq_linear_gain_system} gives maximum possible benefits at the OOB MO. Further, this does not affect the optimal growth of the SE for MO X. \\
		\vspace{-0.35cm}
		
		Next, we analyze the instantaneous channel characteristics at an OOB UE due to these arbitrarily configured IRSs.
		\section{Outage Probability Evaluation}
		We now examine the outage probability of an OOB UE-$q$ as a function of the IRS and channel parameters. The outage probability is $P_{\text{out},q}^{\rho} \triangleq \mathsf{Pr}(|h_q|^2 \leq \rho)$, which from~\eqref{eq_OOB_channel} becomes
				\vspace{-0.1cm}
		\begin{equation}
		\!\!\!P_{\text{out},q}^{\rho}\! = \!\mathsf{Pr}\!\left(\!\left|h_{d,q}\! + \! \frac{M}{\sqrt{L}}\sum\limits_{s=1}^{S} \!\sum\limits_{\ell=1}^{L}\!\! \gamma^{\ell}_{Y,s,q}\mathbf{\dot{a}}_M^H(\omega^{\ell}_{Y,s,q})\boldsymbol{\theta}_s\right|^2\!\!\!\!\leq\! \rho\!\right)\!\!.\!\!
				\vspace{-0.1cm}
		\end{equation}
		Intuitively, each IRS provides an independent link at the OOB UE, so multiple IRSs should provide better diversity gains than single or no IRS scenarios. We have the following result. 
		\begin{theorem}\label{thm_outage_prob_OOB_UE}
		The outage probability at an OOB UE-$q$ due to $S$ randomly configured and distributed IRSs, each with $M$ elements and contributing to $L$ paths, is given by 
		\begin{equation}\label{eqn_outage_prob_OOB_UE}
		\vspace{-0.2cm}
		P_{\emph{out},q}^{\rho} = \left(1- e^{-\rho/\beta_{d,q}}\right)\mathrm{P}_0^S, 		
		\end{equation} 
		where 
				\vspace{-0.1cm}
		\begin{equation*}
		\vspace{-0.1cm}
		\!\!\mathrm{P}_0 =  1 -  \frac{\bar{L}}{M}\!\left(\!\!\frac{\bar{L}e^{\frac{\bar{L}\beta_{d,q}}{M^2\beta_{r,q}} }}{M^2\beta_{r,q}} \mathcal{I}_0\!\left(\!\!\rho;\beta_{d,q},\!\frac{M^2}{\bar{L}}\beta_{r,q}\!\right)\!\!- \! e^{-\dfrac{\rho}{\beta_{d,q}}}\!\!\right) - e^{-\dfrac{\rho}{\beta_{d,q}}},
				\vspace{-0.1cm}
		\end{equation*} $\bar{L}\! \triangleq \!\min\{L,M\}$, and 
 $\mathcal{I}_0(x;c_1,c_2) \triangleq \stretchint{4.5ex}_{\!\!\!\!c_1}^{\infty}e^{-\left(\dfrac{x}{t}+\dfrac{t}{c_2}\right)}\text{d}t.$ 
 		\vspace{-0.1cm}
		\end{theorem}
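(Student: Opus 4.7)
The plan is to exploit the per-IRS alignment model already developed before Theorem~\ref{thm:rate_characterization_mmwave_single_path_IB_multiple_IRS}: each IRS, configured for the in-band UE per~\eqref{eq:opt_IB_mmwave_airtel_multiple_IRS}, independently aligns to one of the $\bar{L}$ OOB cascaded paths with probability $\bar{L}/M$ (flat-top beam approximation), so from the OOB UE's viewpoint the $S$ IRSs act as $S$ i.i.d.\ random branches. First, I would rewrite~\eqref{eq_OOB_channel} as $h_q = h_{d,q} + \sum_{s=1}^{S} W_s$ with $W_s \triangleq (M/\sqrt{L})\sum_{\ell} \gamma^{\ell}_{Y,s,q} \mathbf{\dot{a}}_M^H(\omega^{\ell}_{Y,s,q}) \boldsymbol{\theta}_s$; under the flat-top model $W_s$ is $0$ with probability $1-\bar{L}/M$ and equals $(M/\sqrt{\bar{L}})\gamma^{\ell^{*}}_{Y,s,q}$ with probability $\bar{L}/M$, where $\gamma^{\ell^{*}}$ is the product of two independent complex Gaussians. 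Conditioning on $|\gamma^{(1)}_{s,Y}|^2$ gives $W_s \mid V_s \sim \mathcal{CN}(0, V_s)$, with $V_s \sim$ Exp$(M^2\beta_{r,q}/\bar{L})$ in the aligned case and $V_s = 0$ otherwise; the $\{V_s\}$ are i.i.d.\ and independent of $h_{d,q}$.

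Next, I would compute the single-branch outage $\mathrm{P}_0 \triangleq \mathsf{Pr}(|h_{d,q} + W_s|^2 \le \rho)$ by splitting on alignment. The unaligned case contributes $(1-\bar{L}/M)(1-e^{-\rho/\beta_{d,q}})$. In the aligned case, $h_{d,q} + W_s \mid V_s \sim \mathcal{CN}(0, \beta_{d,q} + V_s)$, so the conditional outage is $1 - \exp(-\rho/(\beta_{d,q}+V_s))$; averaging this against the Exp$(M^2\beta_{r,q}/\bar{L})$ density of $V_s$ and performing the substitution $t = \beta_{d,q} + V_s$ converts the resulting one-dimensional integral into precisely $(\bar{L}/(M^2\beta_{r,q}))\, e^{\bar{L}\beta_{d,q}/(M^2\beta_{r,q})}\, \mathcal{I}_0(\rho; \beta_{d,q}, M^2\beta_{r,q}/\bar{L})$. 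Summing the two cases reproduces the stated closed form for $\mathrm{P}_0$.

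The crucial step is then to argue the factorization $P_{\text{out},q}^{\rho} = (1-e^{-\rho/\beta_{d,q}})\,\mathrm{P}_0^{S}$. My plan is to treat the direct link and the $S$ per-IRS contributions as $S+1$ diversity branches that are independent from the OOB UE's viewpoint: the $S$ spatially separated IRSs have mutually independent scatterer clusters, alignment indicators $\{A_s\}$, and cascaded gains $\{\gamma^{\ell}_{Y,s,q}\}$, and the IRS phases themselves are configured from in-band information that is statistically independent of the OOB channel. The OOB UE is in outage only when the direct branch fails \emph{and} each of the $S$ per-IRS branches fails; invoking independence across branches and the i.i.d.\ property across IRSs then yields the product $(1-e^{-\rho/\beta_{d,q}})\,\mathrm{P}_0^{S}$.

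The hardest part, and the one requiring the most care, is justifying the branch-wise factorization: the raw representation $h_q = h_{d,q} + \sum_s W_s$ shares the single realization of $h_{d,q}$ across all branches, so the factorization is really a statement about the effective diversity model obtained once one averages over the independent alignment events and cascaded fading realizations across the $S$ IRSs, rather than a consequence of pure probabilistic independence of the summands. Once that decoupling is established, the substitution producing $\mathcal{I}_0$ and the final assembly of $\mathrm{P}_0$ are routine, and the exponential decay $P_{\text{out},q}^{\rho}\propto \mathrm{P}_0^S$ follows immediately because $\mathrm{P}_0 < 1$ whenever $\bar{L}/M > 0$.
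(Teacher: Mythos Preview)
Your proposal mirrors the paper's proof: both split on the per-IRS Bernoulli alignment event, compute the single-IRS outage $\mathrm{P}_0$ (you carry out the integral explicitly via conditioning on the exponentially distributed conditional variance and the substitution $t=\beta_{d,q}+V_s$, whereas the paper defers this step to an appendix of a companion reference), and then obtain the product $(1-e^{-\rho/\beta_{d,q}})\,\mathrm{P}_0^{S}$ by treating the direct link and the $S$ IRS links as independent diversity branches. The factorization you flag as delicate is handled in the paper in exactly the same spirit---it is asserted from the independence and i.i.d.\ structure of the per-IRS alignment events $\{\mathcal{E}_s\}$ and channel components $\{h_q^{\mathfrak{s}}\}$ rather than derived from the coherent sum in~\eqref{eq_OOB_channel}---so your caveat is apt and at the paper's own level of rigor.
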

		\begin{proof}
		Let $\mathcal{E}_s$ denote the event that $s$th IRS aligns with the OOB UE's channel. Then, $\mathcal{E}_0$ is the event that no IRS aligns with the OOB channel. We can write
		\vspace{-0.1cm}
			\begin{align}
		\!\!\!P_{\text{out},q}^{\rho} &\stackrel{(a)}{=} \mathsf{Pr}\left(|h_q|^2 \mathbbm{1}_{\{ \mathcal{E}_0\bigcup \mathcal{E}_1 \bigcup \ldots \bigcup \mathcal{E}_S\}} \leq \rho \right)\\ 
		& \stackrel{(b)}{=} \prod_{\mathfrak{s}=0}^S \mathbb{E}_{\mathbbm{1}_{\{\mathcal{E}_{\mathfrak{s}}\}}}\left[\mathsf{Pr}\left(|h_q|^2 \leq \rho \Bigg\vert\mathbbm{1}_{\{\mathcal{E}_{\mathfrak{s}}\}}\right)\right]\\
		& \stackrel{(c)}{=} \mathsf{Pr}(|h_{d,q}|^2 \leq \rho) \!\left(\mathbb{E}_{\mathbbm{1}_{\{\!\mathcal{E}_1\!\}}}\!\!\left[\mathsf{Pr}\!\left(|h_q|^2 \leq \rho\Big\vert\mathbbm{1}_{\{\!\mathcal{E}_1\!\}}\right)\right]\right)^S\!\!,\!\!  \label{eq_outage_prob_template_eqn}
		\end{align}
		where $(a)$ is because $\{\mathcal{E}_{\mathfrak{s}}\}_{\mathfrak{s}=0}^{S}$ is a set of mutually exhaustive events, $(b)$ is because these events are independent, and in $(c)$, the first product term is because under $\mathcal{E}_0$, the channel at  UE-$q$ is only due to the direct link from BS-Y, and the second term is because $\{h_q^{\mathfrak{s}}\}_{\mathfrak{s}=1}^{S}$ and $\{\mathbbm{1}_{\{\!\mathcal{E}_{\mathfrak{s}}\!\}}\}_{\mathfrak{s}=1}^S$ are sets of i.i.d. random variables. Here, $h_q^{\mathfrak{s}}$ is the component of $h_q$ via the $\mathfrak{s}$th IRS. We can now evaluate the expectation term in~\eqref{eq_outage_prob_template_eqn} as
				\vspace{-0.2cm}
		\begin{multline}
		\vspace{-0.2cm}
		\mathbb{E}_{\mathbbm{1}_{\{\!\mathcal{E}_1\!\}}}\!\!\left[\mathsf{Pr}(|h_q|^2 \leq \rho \Big\vert\mathbbm{1}_{\{\mathcal{E}_1\}})\right]  =  (1\!-\!\bar{L}/M)(\mathsf{Pr}(|h_{d,q}|^2 \leq \rho)) \\ + \bar{L}/M (\mathsf{Pr}(|h_{d,q} +  M/\sqrt{\bar{L}} \gamma^{\ell^*}_{Y,1,q}|^2 \leq \rho)).
				\vspace{-0.2cm}
		\end{multline}
		While it is easy to show that $\mathsf{Pr}(|h_{d,q}|^2 \leq \!\rho) \!=\! 1\!- e^{-\rho/\beta_{d,q}}$, the second term in the above can be computed using~\cite[Appendix C]{Yashvanth_TCOM_2023}. Finally, plugging all these values in~\eqref{eq_outage_prob_template_eqn} yields~\eqref{eqn_outage_prob_OOB_UE}.
		\end{proof}
		\vspace{-0.2cm}
		Theorem~\ref{thm_outage_prob_OOB_UE} clearly shows that the outage probability at an OOB UE decreases exponentially in $S$, even though the IRSs are randomly configured from UE-$q$'s viewpoint. Thus, a distributed IRS-aided system provides instantaneous benefits even to an OOB MO without incurring optimization costs. We next illustrate our findings via Monte Carlo simulations.\!\!
		\section{Numerical Results}\label{sec_numerical_results}
		\begin{figure*}
\vspace{-0.6cm}
\hspace{-0.25cm}
\begin{subfigure}{0.28\linewidth}
\includegraphics[width=1.05\linewidth,height=5cm]{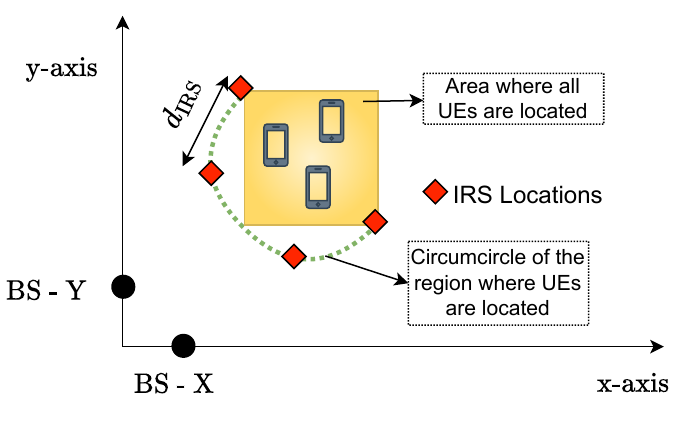}
\vspace{-0.5cm}
\caption{System setup for $S=4$.}
\label{fig_system_setup}
\end{subfigure}
\hspace{0.01cm}
\begin{subfigure}{0.39\linewidth}
\centering
\includegraphics[width=1.08\linewidth,height=5.05cm]{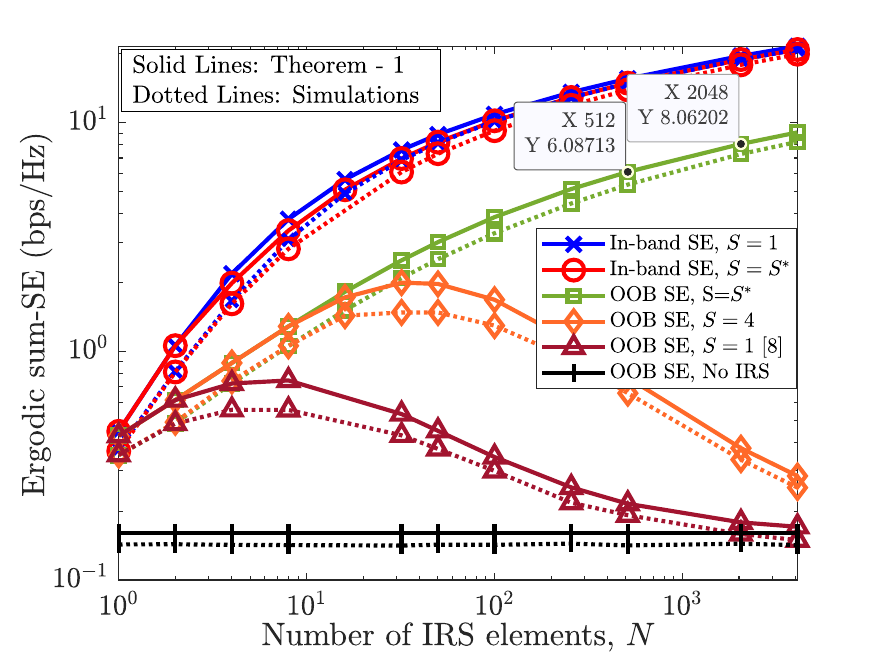}
\vspace{-0.5cm}
\caption{Ergodic sum-SE of the MOs X and Y vs. $N$.}
\label{fig_ergodic_sum_SE}
\end{subfigure}
\begin{subfigure}{0.33\linewidth}
	\includegraphics[width=1.05\linewidth,height=5cm]{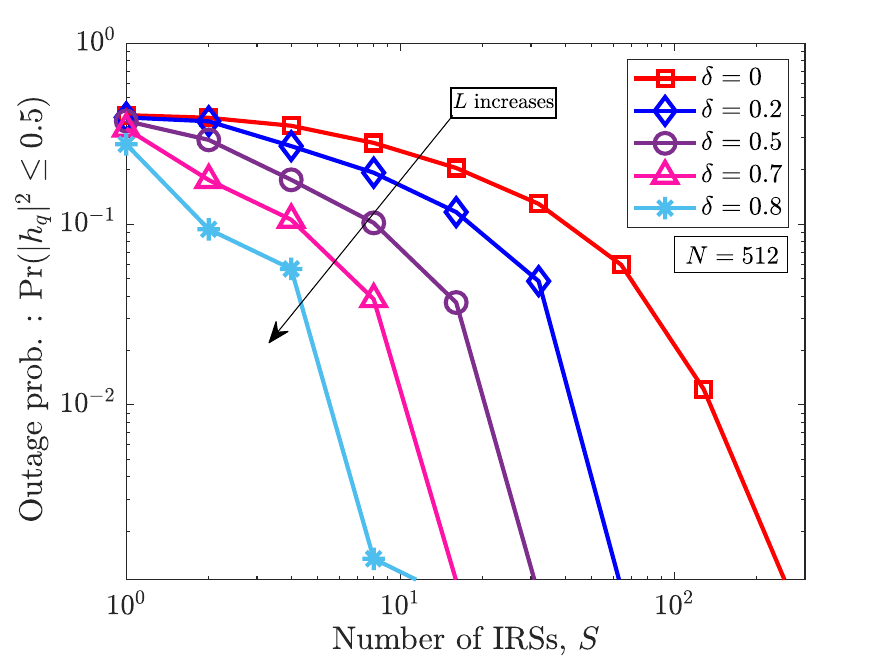}
	\vspace{-0.5cm}
	\caption{Outage probability vs. $S$ at an OOB UE.}
	\label{fig_OOB_outage_prob}
\end{subfigure}
\caption{Schematic of system setup and OOB Performance due to randomly configured distributed IRSs in mmWave frequency bands.}
\vspace{-0.6cm}
\end{figure*}
\begin{figure}
\centering
\includegraphics[width=8.8cm,height=5.7cm]{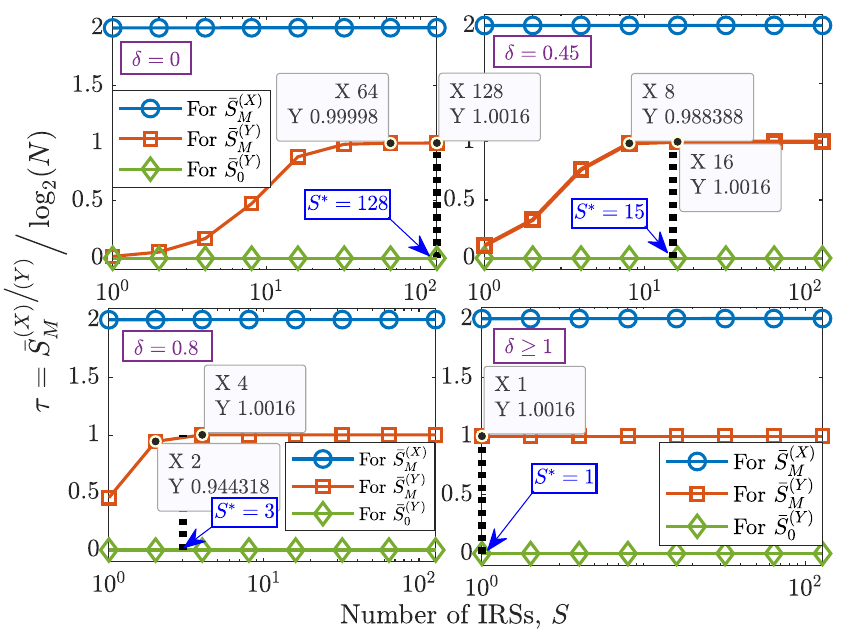}
\caption{Pre-log factor of OOB SE, $\tau$ vs. $S$ as a function of $\delta$ (or $L$).}
\label{fig_OOB_SE_scaling_in_N}
\vspace{-0.4cm}
\end{figure}
		We consider that BS-X is located at $(50,0)$, BS-Y at $(0,50)$, and the UEs in a rectangular region $\mathcal{R} \triangleq [900,1100] \times [900,1100]$. The IRSs are uniformly spaced on a semi-circular part of the circle that circumscribes $\mathcal{R}$, as shown in Fig.~\ref{fig_system_setup}. Note that our results hold true for any other choice of IRS locations and the distance between the IRSs. The path loss is modeled as $\!\beta \!= \!C_0(d_0/d)^\alpha\!$, where $C_0$ is the path loss at the reference distance $d_0$, $d$ is the node distance, and $\alpha$ is the path loss exponent which is $2,2.2$ and $4.5$ in the BS-IRS, IRS-UE, and BS-UE links. Further, BS-X and BS-Y serve $K=10$ and $Q=10$ UEs using an RR scheduler over $10,000$ time slots. \\
\indent In Fig.~\ref{fig_ergodic_sum_SE}, we plot the ergodic sum-SE of both MOs versus the total number of IRS elements, $N$, for different values of $S$. We use $C_0\left(P/\sigma^2\right) = 150$ dB and $L=2$~\cite{Yashvanth_TCOM_2023}. We first observe that the SE of the MO X is invariant to the number of IRSs in the system. Thus, the optimal SE scaling of $\mathcal{O}(2\log_2(N))$ is retained for any $S$. However, the OOB SE significantly changes with $S$. For e.g., when $S=4$, until some point, the OOB SE log-linearly increases in $N$ and then exhibits log-sub-linear growth for larger values of $N$ (in the regime where  $L < M$), i.e., the pre-log factor of OOB SE lies in $[0,1)$. The latter is because, for large $N$ and fixed small $L$, unless $S$ is reasonably large, the OOB UE does not benefit much. We then plot the OOB SE for $S=S^*$, where $S^*$ is as per Proposition~\ref{prop_OOB_SNR_linear_N}. In this case, as indicated in the data marked inside the figure, a $4\times$ increase in $N$ leads to a boost of the SE by $2$ bps/Hz. Note that $2 + \log_2(N) = 1\cdot\log_2\left((4N)\right)$. In other words, the OOB SE uniformly achieves the (maximum) scaling of $\log_2(N)$ with sufficiently many IRSs. We also compare our results with that in~\cite{Yashvanth_TCOM_2023}, which considers $S=1$ on the same plot. The OOB SE obtained in distributed IRSs is clearly better than the single IRS case. Also, the simulations match well with Theorem~\ref{thm:rate_characterization_mmwave_single_path_IB_multiple_IRS}, illustrating the accuracy of the rate laws and the Bernoulli distribution-based analysis for an OOB MO.\\
\indent In Fig.~\ref{fig_OOB_SE_scaling_in_N}, we illustrate the tightness of Proposition~\ref{prop_OOB_SNR_linear_N}. Specifically, we plot the SE pre-log factor: $\tau \triangleq \bar{S}_M^{(X)/(Y)}/\log_2(N)$ as a function of $S$. Each sub-plot represents a system with different $L$, through $\delta \triangleq \log_NL$ with $N=128$. The plot shows that the result in Proposition~\ref{prop_OOB_SNR_linear_N} is tight because, in all cases, when $S < S^*$, $\tau \in [0,1)$, i.e., the OOB SE grows only log-sub-linearly in $N$. Contrarily, for $S \geq S^*$, the OOB-SE scales as $\mathcal{O}(\log_2(N))$ for any $N,L$. Also, for the in-band UE, $\tau=2$, in line with Fig.~\ref{fig_ergodic_sum_SE}. Further,  since the OOB SE does not depend on $N$ in the absence of IRSs, we have $\tau=0$.

\indent Finally, in Fig.~\ref{fig_OOB_outage_prob}, we plot the outage probability at an OOB UE (after normalizing the channel path losses) with $\rho=0.5$ as a function of $S$ for different $L$ through $\delta = \log_N L$, and $N=512$. For a given $\delta$ (or $L$), the outage probability decreases exponentially in the number of IRSs, in line with  Theorem~\ref{thm_outage_prob_OOB_UE}. Further, for a fixed $S$, the outage probability decreases with $\delta$ as the likelihood that the randomly configured IRSs help the OOB UE increases with the number of paths.
\vspace{-0.05cm}
\section{Conclusions}
We studied the impact of distributed IRSs on the performance of an OOB MO that is oblivious to the presence of the IRSs. We showed that the maximum SE of the OOB MO grows log-linearly in the number of IRS elements; we also developed design specifications that almost surely achieve this SE. Finally, we proved that the outage probability at an OOB UE decays exponentially in the number of IRSs, free of cost. Thus, distributed IRSs help all MOs in the area. Future work includes extending our results to interference-limited scenarios in multi-cell systems.
\vspace{-0.1cm}
		\bibliographystyle{IEEEtran}
		\bibliography{IEEEabrv,References_Bibtex}
	\end{document}